\documentclass{article}

\usepackage[dvips]{graphicx}
\usepackage[latin1]{inputenc}
\usepackage{amssymb,amsmath,amsfonts}
\usepackage{color}
\usepackage{hyperref,breakurl}
\usepackage{floatflt,float}
\usepackage{paralist}

\usepackage{amsthm}

\usepackage[ruled]{algorithm2e}

\newcommand{\N}{\mathbb{N}}

\newcommand{\R}{\mathbb{R}}

\newcommand{\Order}{\mathcal{O}}
\newcommand{\Normal}{\mathcal{N}}

\newcommand{\id}{\mathbb{I}}
\newcommand{\ones}{\mathbf{1}}

\newcommand{\bmat}{\begin{pmatrix}}
\newcommand{\emat}{\end{pmatrix}}

\newtheorem{theorem}{Theorem}

\begin{document}

\begin{center}
\textbf{\LARGE A Fast Incremental BSP Tree Archive\\[0.5em]for Non-dominated Points%
}\\[2em]
Tobias Glasmachers\\
Institute for Neural Computation, Ruhr-University Bochum, Germany\\
\texttt{tobias.glasmachers@ini.rub.de}\\[2em]
Note: A paper with identical content was submitted to EMO'2017.\\[1em]
\hrulefill\\[2em]
\end{center}

\begin{abstract}
Maintaining an archive of all non-dominated points is a standard task in
multi-objective optimization. Sometimes it is sufficient to store all
evaluated points and to obtain the non-dominated subset in a
post-processing step. Alternatively the non-dominated set can be updated
on the fly. While keeping track of many non-dominated points efficiently
is easy for two objectives, we propose an efficient algorithm based on a
binary space partitioning (BSP) tree for the general case of three or
more objectives. Our analysis and our empirical results demonstrate the
superiority of the method over the brute-force baseline method, as well
as graceful scaling to large numbers of objectives.
\end{abstract}

\section{Introduction}
\label{section:introduction}

Given $m \geq 2$ objective functions $f_1, \dots, f_m : X \to \R$, a
central theme in multi-objective optimization is to find the Pareto set
of optimal compromises: the set of points $x \in X$ that cannot be
improved in any objective without getting worse in another one. The
cardinality of this set of often huge or even infinite. In a black-box
setting it is best approximated by the set of mutually non-dominated
query points.

An archive $A$ of non-dominated solutions is a data structure for
keeping track of the known non-dominated points
$\{x^{(1)}, \dots, x^{(n)}\} \subset X$
of a multi-objective optimization problem. It can result from a single
run of an optimization algorithm, or from many runs of potentially
different algorithm.

Depending on the application such an archive can serve different
purposes. It can act as a portfolio of solutions accessed by a decision
maker, possibly after going through further post-processing steps. It
can also act as input to various algorithms, e.g., selection operators
of evolutionary multi-objective optimization algorithms (MOEAs),
stopping criteria, and performance assessment and monitoring tools. All
of these algorithms may involve the computation of set quality
indicators such as dominated hypervolume, for which the computation of
the non-dominated subsets is a pre-processing step.

Some of the above applications require access to the non-dominated set
``anytime'', i.e., already during the optimization (online case, called
dynamic problem in \cite{schuetze:2003}), while others get along with
storing all solutions and extracting the non-dominated subset after the
optimization is finished (offline or static case). The storage of
millions of intermediate points does not pose a problem on today's
computers, and even full non-dominated sorting is feasible on huge
collections of objective vectors with suitable algorithms. Hence, we
consider the offline case a solved problem, at least for moderate values
of~$m$.

In this paper we are interested in archives with the following properties:
\begin{compactitem}
\item[$\bullet$]
	Online updates of the set of non-dominated points shall be efficient.
	In particular, it shall be feasible to process long sequences of
	candidate solutions one by one, i.e., as soon as they are proposed
	and evaluated by an iterative optimization algorithm.
\item[$\bullet$]
	The archive shall not contain dominated points, even if these points
	were non-dominated at an earlier stage. I.e., points shall be
	removed as soon as they are dominated by a newly inserted point.
\item[$\bullet$]
	The full set of non-dominated points shall be stored, not an
	approximation set of a-priori bounded cardinality.
\item[$\bullet$]
	Ideally, the algorithm should scale well not only to large archives,
	but also to a large number of objectives.
\end{compactitem}
These prerequisites imply the following processing steps. First it is
checked whether a candidate point $x$ is dominated by any point in the
archive or not. If $x$ is non-dominated then it is added to the archive.
In addition, any points in $A$ that happen to be dominated by $x$ are
removed. All of these steps should be as efficient as possible. At the
very least, they should be faster than the $\Order(n m)$ linear ``brute
force'' search through the archive.

In this paper we propose an algorithm achieving this goal. It is based
on a binary space partitioning tree (BSP tree). Its performance is
demonstrated empirically, for large archives (up to $n = 2^{18}$
non-dominated points) and large numbers of objectives (up to $m = 50$),
as well as analytically in the form of asymptotic (lower and upper)
runtime bounds. We provide C++ source code for the proposed archive.%
\footnote{\url{http://www.ini.rub.de/PEOPLE/glasmtbl/code/ParetoArchive/}}
It is based on the implementation from the (non-public) code base of the
black box optimization competition (BBComp)%
\footnote{\url{http://bbcomp.ini.rub.de}},
where it is applied for online monitoring of the optimization progress.

In the remainder of this paper we first define the problem and fix our
notation. After reviewing related work we present the proposed archiving
algorithm. We derive asymptotic lower and upper bounds on its runtime
and assess its practical performance empirically on a variety of tasks.

\section{Definitions and Notation}

\paragraph{Dominance Order}
The objectives are collected in the vector-valued objective function
$f : X \to \R^m$, $f(x) = \big( f_1(x), \dots, f_m(x) \big)$. For
objective vectors $y, y' \in \R^m$ we define the Pareto dominance
relation
\begin{align*}
	y &\preceq y' \quad \Leftrightarrow \quad y_k \leq y'_k \text{ for all } k \in \{1, \dots, m\} \enspace, \\
	y &\prec y' \quad \Leftrightarrow \quad y \preceq y' \text{ and } y \not= y' \enspace.
\end{align*}
This relation defines a partial order on $\R^m$, incomparable values
$y, y'$ fulfilling $y \not\preceq y'$ and $y' \not\preceq y$ remain.
The relation is pulled back to the search space~$X$ by the definition
$x \preceq x'$ iff $f(x) \preceq f(x')$.

\paragraph{Pareto front and Pareto set}
Let $Y = \{f(x) \,|\, x \in X\} = f(X) \subset \R^m$ denote the image of
the objective function (also called the attainable objective space).
The Pareto front is defined as the set of objective values that are
optimal w.r.t.\ Pareto dominance, i.e., the set of non-dominated
objective vectors
\begin{align*}
	Y^* = \Big\{ y \in Y \,\Big|\, \not\exists \, y' \in Y: y' \prec y \Big\}
	\enspace.
\end{align*}
The Pareto set is $X^* = f^{-1}(Y^*)$.

\paragraph{Non-dominated points}
In the sequel we will deal with objective vectors, not with actual
search points. In this paper we restrict ourselves to maintaining only a
single search point for each non-dominated objective vector, although it
is possible to observe any number of search points of equal quality.
In other words, we aim to approximate the Pareto front, represented by a
minimal complete Pareto set.

For our purposes, let $y^{(1)}, \dots, y^{(n)} \in \R^m$ denote the set
of all known non-dominated objective vectors stored in the archive at
some point, with $n$ denoting the current cardinality of the archive.
The objective vector of the new candidate $x$ is denoted by
$y = f(x) \in \R^m$.

\section{Related Work}
\label{section:related-work}

\paragraph{Fixed Memory Approximations.}
The growth of the set of non-dominated points is in general unbounded.
Still, many MOEAs use their (fixed size) population as a rough
approximation of the Pareto front. Even if an external archive is
employed, computer memory is finite, which sets limits to the approach
of archiving an unbounded number of non-dominated points. Therefore
limited memory archives (e.g., based on clustering) were subject of
intense study~\cite{fieldsend:2003,lopez:2011}. Their main disadvantage
is the limited precision of their Pareto front representation, which can
even lead to oscillatory behavior when used for
optimization~\cite{fieldsend:2003}.

\paragraph{Memory Consumption.}
Nowadays even commodity PCs are equipped with gigabyes of RAM,
enabling the storage of millions of objective vectors in memory.%
\footnote{This is usually sufficient for the needs of evolutionary
  optimization. In data base query problems larger sets must be
  processed. Hence in some applications memory consumption is still a
  concern.}
This makes it feasible for MOEAs to maintain all non-dominated points in
the population, as proposed by Krause et al.~\cite{krause:2016}.
Although the algorithm is limited to $m=2$ objectives, it demonstrates
successfully that even on standard hardware memory is no longer a
limiting factor for archiving of all known non-dominated points.

\paragraph{Offline Case.}
Efficient algorithms are known for the offline case of obtaining the
Pareto optimal subset from a set of points, i.e., at the end of an
optimization run. This problem was first addressed in \cite{kung:1975}.
Even full non-dominated sorting (delivering not only the Pareto optimal
subset, but a partitioning into disjoint fronts) can be achieved in only
$\Order(n \log(n)^{m-1})$ operations and $\Order(n m)$
memory~\cite{jensen:2003,fortin:2013}. However, computing the full
Pareto front from scratch after every update is wasteful, and hence
specialized updating algorithm are needed for the online case.

\paragraph{Skylines.}
A closely related but slightly different problem is found in data base
queries: so-called ``skyline queries'' ask for a skyline of records,
which is just a different terminology for the non-dominated (Pareto
optimal) subset with respect to a specified subset of fields. While the
offline case is the most important one, online algorithms with good
anytime performance are of relevance.
The requirements differ from the problem under study in a decisive point:
in the data base setting the full set of records is available from the
start. The process is limited only by computational and memory
constraints, but not by the sequential nature of proposing points in an
iterative optimizer. Efficient algorithms (e.g., based on nearest
neighbor queries) exist for the skyline
problem~\cite{kossmann:2002,papadias:2003}.

\paragraph{Search Trees.}
Tree data structures of various sorts are attractive for tackling our
problem. They offer a natural problem decomposition, i.e., adding a
single point usually affects only a small neighborhood of the current
front, possibly represented by a small sub-tree.
The dominance decision tree (DDT) data structure was proposed
specifically for the problem at hand \cite{schuetze:2003}. They work
well if the fraction of non-dominated points and hence the archive is
small~\cite{schuetze:2003}. In contrast, quad trees can reduce the
computational effort for large archives \cite{mostaghim:2002}. A clear
disadvantage of quad trees generalized to $m > 2$ objectives is that
partitioning a space cell results in $2^m$ sub-cells, which limits the
approach to small numbers $m$ of objectives.

\section{Algorithms}
\label{section:algorithms}

We first discuss the trivial baseline method and a highly efficient
alternative for the special case of $m=2$ objectives. Then we present
our core contribution, an efficient method for the general case of
$m \geq 3$ objectives.

\subsection{Baseline Method}

The naive ``brute force'' baseline method stores all non-dominated
points in a flat linear memory vector (dynamically extensible array) or
linked list. The insert operation loops over the archive. It compares
the candidate vector $y$ to each stored $y^{(k)}$ w.r.t.\ dominance.
This takes $\Order(m)$ operations per point in the archive.
If $y^{(k)} \preceq y$ then the point does not need to be archived and
the procedure stops. If $y \prec y^{(k)}$ then $y^{(k)}$ is removed.
In the list, removal takes $O(1)$ operations. In the vector we overwrite
the dominated point with the last point, which is then removed
($\Order(m)$ operations). Finally, $y$ is appended to the end of the
list or vector (amortized constant time for vectors with a doubling
strategy, but even a possible $\Order(n m)$ relocation of the memory
block does affect the analysis).
Hence, in the worst case the brute force method performs $\Theta(n m)$
operations. It is significantly faster (in expectation over a presumed
random order of the archive) only if $y$ is dominated by more than
$\Order(1)$ points from the archive. Advantages of this algorithm are
the trivial implementation and, in case of a linear memory array,
optimal use of the processor cache.
The method can be considered an online variant of Deb's fast
non-dominated sorting algorithm \cite{deb:2000}, however, restricted to
the first non-dominance rank.

\subsection{Special Case of Two Objectives}

For the bi-objective case the Pareto front is well known to obey a
special structure that can be exploited for our purpose: we keep the
archive sorted w.r.t.\ the first objective $f_1$ in ascending order,
which automatically keeps it sorted in descending order w.r.t.\ to the
second objective $f_2$. Given $y$ we search for the indices
\begin{align*}
	\ell = \arg \max \Big\{j \,\Big|\, y^{(j)}_1 \leq y_1 \Big\}
	\qquad \text{and} \qquad
	r = \arg \min \Big\{j \,\Big|\, y^{(j)}_1 \geq y_1 \Big\}
\end{align*}
of $y$'s potential ``left'' and ``right'' neighbors on the front.
If the $\arg\min$ is undefined because the set is empty then $y$ is
non-dominated and we set $r = 0$ for further processing.
If $y$ is weakly dominated by any archived point then it is also weakly
dominated by $y^{(\ell)}$, which is the case exactly if
$y^{(\ell)}_2 \leq y_2$. Hence once $\ell$ is found the check is fast.
If $y$ happens to dominate any archived points, then these are of the
form $\{y^{(r)}, y^{(r+1)}, \dots, y^{(r+s)}\}$. Hence, given $r$ it is
easy to identify the dominated points, which are then removed from the
archive.

Self-balancing trees such as AVL-trees and red-black-trees are suitable
data structures for performing these operators quickly. The search for
$\ell$ and $r$ and the insert operation require $\Order(\log(n))$
operations each, and so does the removal of a point. In an amortized
analysis there can be at most one removal per insert operation, hence
the overall complexity remains as low as $\Order(\log(n))$, which is far
better than the $\Order(n)$ brute-force method (note that here $m$ does
not really enter the complexity since it is fixed to the constant $m=2$).
With this archive, the cumulated effort of $n$ iterative updates equals
(up to a constant) the full non-dominated sort post-processing step by
sweeping objectives~\cite{fortin:2013}.

\subsection{General Case of $m \geq 3$ Objectives}

For more than two objectives non-dominated sets obey far less structure
than in the bi-objective case. Hence it is unclear which speed-up is
achievable. However, it should be possible to surpass the linear
complexity of the brute-force baseline. In the following we present an
algorithm that achieves this goal.

We propose to store the archived non-dominated points in a binary
space partitioning (BSP) tree. Among the different variants a simple
k-d tree \cite{de-berg:2000} seem to be best suited. This choice was
briefly discussed and quickly dismissed in \cite{fieldsend:2003} and
\cite{schuetze:2003}. Nonetheless we propose an archiving algorithm
based on a k-d tree, for the following reason. We expect most new
non-dominated points to dominate only a local patch of the current front.
Therefore it should be possible to limit dominance comparisons to few
archived points close to the candidate point. Space partitioning is a
natural approach to exploiting this locality.

Let $R$ denote the root node. Each interior node of the tree is a data
structure keeping track of its left and right child nodes $\ell$ and $r$,
an objective index $j \in \{1, \dots, m\}$, and a threshold
$\theta \in \R$. For a node $N$ we refer to these data fields with the
dot-notation, i.e., $N.r$ refers to the right child node of~$N$. With
$p(N)$ we refer to the parent node, i.e., $p(N.\ell) = N$ and
$p(N.r) = N$, while $p(R)$ is undefined. We write $p^2(N) = p(p(N))$,
and $p^k(N)$ for the $k$-th ancestor of~$N$.

The objective space is partitioned as follows. Let $\Delta(N)$ denote
the subspace represented by the node $N$. We have $\Delta(R) = \R^m$.
We recursively define
$\Delta(N.\ell) = \{y \in \Delta(N) \,|\, y_{N.j} < N.\theta\}$ and
$\Delta(N.r) = \{y \in \Delta(N) \,|\, y_{N.j} \geq N.\theta\}$.

Each leaf node holds a set $P$ of objective vectors limited in
cardinality by a predefined bucket size~$B \in \N$.
In addition, each node (interior or leaf) keeps track of the number $n$
of objective vectors in the subspace it represents.

Newly generated objective vectors $y$ are added one by one to the
archive with the \texttt{Process} algorithm laid out in
algorithm~\ref{algo:Process}.

\begin{algorithm}
	\caption{Process($y$)}
	\label{algo:Process}
	$s \leftarrow$ CheckDominance($R$, $y$, $0$, $\emptyset$) \\
	\If {$s \geq 0$}
	{
		$N \leftarrow R$ \\
		\While {$N$ is interior node}
		{
			$N.n \leftarrow N.n + 1$ \\
			\textbf{if} {$y_{N.j} < N.\theta$} \textbf{then} $N \leftarrow N.\ell$ \textbf{else} $N \leftarrow N.r$ \\
		}
		$P \leftarrow N.P \cup \{y\}$ \\
		\textbf{if} {$|P| \leq B$} \textbf{then} $N.P \leftarrow P$; $N.n \leftarrow |P|$ \\
		\Else
		{
			make $N$ an interior node and create new leaf nodes as children \\
			select $N.j$ and $N.\theta$ (see text) \\
			$N.\ell.P \leftarrow \{p \in P \,|\, p_{N.j} < N.\theta\}$; $N.\ell.n \leftarrow |N.\ell.P|$ \\
			$N.r.P \leftarrow \{p \in P \,|\, p_{N.j} \geq N.\theta\}$; $N.r.n \leftarrow |N.r.P|$ \\
		}
	}
\end{algorithm}

\begin{algorithm}
	\caption{CheckDominance($N$, $y$, $B$, $W$)}
	\label{algo:CheckDominance}
	$k \leftarrow 0$ \\
	\If {$N$ is leaf node}
	{
		\ForEach {$p \in N.P$}
		{
			\textbf{if} {$p \preceq y$} \textbf{then} \Return {$-1$} \\
			\If {$y \prec p$}
			{
				$N.P \leftarrow N.P \setminus \{p\}$ \\
				$N.n \leftarrow N.n - 1$ \\
				$k \leftarrow k + 1$ \\
			}
		}
	}
	\If {$N$ is interior node}
	{
		\textbf{if} {$y_{N.j} < N.\theta$} \textbf{then} $B' \leftarrow B \cup \{N.j\}$; $W' \leftarrow W$ \\
		\textbf{else} $B' \leftarrow B$; $W' \leftarrow W \cup \{N.j\}$ \\
		\textbf{if} {$|W'| = m$} \textbf{then} \Return {$-1$} \\
		\textbf{else~if} {$|B'| = m$} \textbf{then} $k \leftarrow k + N.r.n$; $N.r.n \leftarrow 0$ \\
		\Else
		{
			\If {$W' = \emptyset \lor B = \emptyset$}
			{
				$s \leftarrow$ CheckDominance($N.\ell$, $y$, $B$, $W'$) \\
				\textbf{if} {$s < 0$} \textbf{then} \Return {$-1$} \\
				$k \leftarrow k + s$ \\
			}
			\If {$B' = \emptyset \lor W = \emptyset$}
			{
				$s \leftarrow$ CheckDominance($N.r$, $y$, $B'$, $W$) \\
				\textbf{if} {$s < 0$} \textbf{then} \Return {$-1$} \\
				$k \leftarrow k + s$ \\
			}
		}
		$N.n \leftarrow N.n - k$ \\
		\textbf{if} {$N.\ell.n > 0$ and $N.r.n = 0$} \textbf{then} overwrite $N$ with $N.\ell$ \\
		\textbf{if} {$N.\ell.n = 0$ and $N.r.n > 0$} \textbf{then} overwrite $N$ with $N.r$ \\
	}
	\Return {$k$} \\
\end{algorithm}

\subsubsection{Processing a Candidate Point.}

\texttt{Process} calls the recursive \texttt{CheckDominance} algorithm
(algorithm~\ref{algo:CheckDominance}), which returns the number of
points in the archive dominated by $y$, or $-1$ if $y$ is dominated by
at least one point in the archive. The procedure also removes all points
dominated by $y$. If $y$ is non-dominated (i.e., the return value is
non-negative), then $y$ is inserted into the tree by descending into the
leaf node $N$ fulfilling $y \in \Delta(N)$. If the insertion would
exceed the bucket size ($|P| > B$, with $P = N.P \cup \{y\}$), then the
node is split. To this end we need to select an objective index $j$ and
a threshold $\theta$. The only hard constraint on the choice of $j$ is
that $V_j = \{y_j \,|\, y \in P\}$ must contain at least two values, so
that splitting the space in between yields two leaf nodes holding at
most $B$ objective vectors. However, for reasons that will become clear
in the next section we prefer to select different objectives if possible
when descending the tree. For $j \in \{1, \dots, m\}$ we define the
distance $d_j = \min\{p^k(N).j = j \,|\, k \in \N\}$ of the node $N$ in
its chain of ancestors from the next node splitting at objective~$j$. We
set $d_j = \infty$ if the root node is reached before observing
objective~$j$.
We chose $j \in \arg\max_j \big\{ d_j \,\big|\, |V_j| > 1 \big\}$.
For the selection of $\theta$ we have to take into account that multiple
objective vectors may agree in their $j$-th component. We select
$\theta$ as the midpoint of two values from $V_j$ so that the split
balances the leaves as well as possible. For even $B$ (and hence an
uneven number of objective vectors) we prefer more points in the left
leaf.

\subsubsection{Recursive Check of the Dominance Relation.}

The \texttt{CheckDominance} algorithm is the core of our method. It
takes a node $N$, the candidate point $y$, and two index sets $B$ and
$W$ as input. It returns the number of points in the subtree strictly
dominated by $y$, and $-1$ if $y$ is weakly dominated by any point in
the space cell represented by the subtree. The algorithm furthermore
removes all dominated points from the subtree.

When faced with a leaf node it operates similar to the brute force
algorithm. However, for interior nodes it can do better. To this end,
note that the set $\Delta(N) \subset \R^m$ can be written in the form
\begin{align*}
	\Delta(N) = \Delta_1(N) \times \dots \times \Delta_m(N)
\end{align*}
where $\Delta_j(N) \subset \R$ is the projection of $\Delta(N)$ to the
$j$-th objective. Since the reals are totally ordered, we distinguish
the cases $y_j < \Delta_j(N)$, $y_j \in \Delta_j(N)$, and
$y_j > \Delta_j(N)$. Note that $y_j < \Delta_j(N)$ for all $j$ implies
that $y$ dominates the whole space cell $\Delta(N)$, similarly
$y_j > \Delta_j(N)$ implies that $y$ is dominated by any point in
$\Delta(N)$. If there exist $i, j$ so that $y_i < \Delta_i(N)$ and
$y_j > \Delta_j(N)$ then $y$ is incomparable to all points in
$\Delta(N)$.

The algorithm descends the tree by recursively invoking itself on the
left and right child nodes, but only if necessary. The recursion is
necessary only if the comparisons represented by the recursive calls up
the call stack do not determine the dominance relation between $y$ and
$\Delta(N)$ yet. At the root node we know that it holds
$y_j \in \Delta_j(R)$ for all $j \in \{1, \dots, m\}$. Hence we have
either $y_{R.j} \in \Delta_{R.\ell}$ or $y_{R.j} \in \Delta{R.r}$, and
hence either $y_{R.j} > \Delta_{R.\ell}$ or $y_{R.j} < \Delta{R.r}$.
The sets $B$ and $W$ keep track of the objectives in which $y$ is
better or worse than $\Delta(N)$. The two sets are apparently disjoint.
If they are non-empty at the same time then the candidate point and the
space cell are incomparable, hence the recursion can be stopped. If
$W$ equals the full set $\{1, \dots, m\}$ then $y$ is dominated by the
space cell $N$. The mere existence of the node guarantees that it
contains at least one point, so we can conclude that $y$ is dominated.
If on the other hand $B = \{1, \dots, m\}$ then all points in $N$ are
dominated are hence removed. If the algorithm finds one of its child
nodes empty after the recursion then it recovers a binary tree by
replacing the current node with the remaining child. No action is
required if both child nodes are empty since this implies $N.n = 0$,
and the node will be removed further up in the tree.

\subsubsection{Balancing the Tree.}

In contrast to the bi-objective case it is unclear how to balance the
tree at low computational cost.
This is not a severe problem for objective vectors drawn i.i.d.\ from
a fixed distribution. This situation is fulfilled in good enough
approximation when performing many short optimization runs. However, for
a single (potentially long) run of an optimizer we can expect a
systematic shift from low-quality early objective vectors towards better
and better solutions over time. Hence most points proposed late during
the run will tend to end up in the left child of a node, the split point
of which was determined early on. We counter this effect by introducing
a balancing mechanism as follows. If the quotient
$\frac{N.\ell.n}{N.r.n}$ raises above a threshold $z$ or falls below
$\frac{1}{z}$ then the smaller child node is removed, the larger one
replaces its parent, and the points represented by the smaller node are
inserted. Although this process is computationally costly, it can pay
off in the long run in case of highly unbalanced trees.

\section{Analysis}
\label{section:analysis}

In this section we analyze on the complexity of the BSP tree based
archive algorithm. We start with the storage requirements. When storing
$n$ non-dominated points, in the worst case there are $n$ distinct leaf
nodes, and hence $n-1$ interior nodes in the tree, requiring $\Order(n)$
memory in addition to the unavoidable requirement of $\Order(nm)$ for
storing the non-dominated objective vectors. Hence the added memory
footprint due to the BSP tree is unproblematic. In our implementation
the overhead of a tree node is 56 bytes on a 64bit system, which makes
it feasible to store millions of points in RAM.

The analysis of the runtime complexity is more involved. Since archiving
small numbers of points is uncritical, we focus on the case of large~$n$,
which is well described by an average case amortized analysis. For the
analysis we drop the rather heuristic balancing mechanism. For
simplicity we set the bucket size to $B = 1$ and assume a perfectly
balanced tree of depth $\log_2(n)$. Although optimistic, this assumption
is not too unrealistic: note that the depth of a random tree is
typically of order $2 \log_2(n)$.

The strongest technical assumption we make for the analysis is that
objective vectors are sampled i.i.d.\ from a static distribution. Let
$P$ denote a probability distribution on $\R^m$ so that for two random
objective vectors $a, b \sim P$ the events $a \preceq b$ and
$\exists j \in \{1, \dots, m\} : a_j = b_j$ have probability zero.
We consider a BSP archive constructed by inserting $n$ points sampled
i.i.d.\ from $P$. Then we are interested in bounding the expected
runtime $T$ required for processing a candidate point $y \sim P$.

For a node $N$ representing the space cell $\Delta(N)$ we define its
order w.r.t.\ the candidate point $y$ as
$k = |\{j \,|\, y_j \not\in \Delta_j(N) \}|$. We call a node comparable
to $y$ if its space cell contains at least one comparable point (w.r.t.\
the Pareto dominance relation). All incomparable cells are skipped by
the algorithm, hence the runtime is proportional to the number of
comparable nodes. A node is incomparable to $y$ if there exist $j_1$ and
$j_2$ such that $y_{j_1} < \Delta_{j_1}(N)$ and $y_{j_2} > \Delta_{j_2}(N)$.
Furthermore, cells dominated by $y$ ($y_j < \Delta_j(N)$ for all $j$)
don't need to be visited, actually, encountering such a cell stops the
algorithm immediately. Similarly, nodes dominated by $y$ can be ignored
in an amortized analysis since on average only one point can
be removed per insertion, and the cost of a removal is as low as
$\Order(\log(n))$. Hence all nodes of order $m$ can be ignored for the
analysis.

In the following we denote the probability for a random node at depth
$d$ below the root node (the root has depth $0$) to have order
$k \in \{0, \dots, m\}$ with $Q_d(k)$. We have $Q_0(0) = 1$ and
$Q_d(k) = 0$ for $k > d$.

The following theorem provides a lower bound on the runtime in the best
case, namely when each split of the BSP tree induces the same chance to
yield an incomparable child node.
\begin{theorem}
\label{theorem:lower}
Assume that when traversing from root to leaf no two space splits are
along the same objective.
Then we have $T \in \Omega(n^{\log_2(3/2)})$.
\end{theorem}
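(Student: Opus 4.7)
The plan is to bound $T$ from below by the expected number of tree nodes on which \texttt{CheckDominance} does non-trivial work, since each such visit contributes at least constant time. Under the theorem's hypothesis, the $2^d$ nodes at depth $d$ of the perfectly balanced tree are naturally indexed by binary strings $s \in \{L,R\}^d$, with the $i$-th entry determining which branch of the $i$-th split (on a fresh objective $j_i$, at threshold $\theta_i$) is taken. For a fixed candidate $y$, writing $d_i \in \{L,R\}$ for the direction $y$ itself would take at that split, I would first translate the $B/W$ bookkeeping of algorithm~\ref{algo:CheckDominance} into this language: the objective $j_i$ ends up in $B$ at node $s$ exactly when $s_i = R$ and $d_i = L$, and in $W$ exactly when $s_i = L$ and $d_i = R$. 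In particular $j_i \in B\cup W$ iff $s_i \neq d_i$.

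My second step would be to invoke the ``best case'' idealization of the theorem (perfect balance of an i.i.d.\ sample together with pairwise-distinct split objectives along the path) to argue that each $d_i$ is Bernoulli$(1/2)$ and that the family $\{d_i\}_{i=1}^d$ is mutually independent, because the $i$-th decision probes a fresh coordinate of $y$ against a threshold that, by balancing, is essentially a conditional median. Under this, the node $s$ with $r$ R's and $\ell = d-r$ L's is visited iff $B = \emptyset$ or $W = \emptyset$, which by inclusion--exclusion has probability
\begin{equation*}
  \Prob[s \text{ visited}] \;=\; 2^{-r} + 2^{-\ell} - 2^{-d}.
\end{equation*}

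Third, summing over the $\binom{d}{r}$ strings $s$ with a given $r$ and applying the binomial theorem twice yields
\begin{equation*}
  \sum_{s \in \{L,R\}^d} \Prob[s \text{ visited}]
  \;=\; \sum_{r=0}^{d} \binom{d}{r}\bigl(2^{-r} + 2^{-(d-r)} - 2^{-d}\bigr)
  \;=\; 2\cdot(3/2)^d - 1,
\end{equation*}
and summing this geometric sequence over $d = 0, 1, \dots, \log_2 n$ gives a total dominated by its last term of order $(3/2)^{\log_2 n} = n^{\log_2(3/2)}$. Since every visited node costs at least $\Omega(1)$ (and the leaves additionally perform the comparison against their stored point), this lower-bounds $\mathbb{E}[T]$ by $\Omega(n^{\log_2(3/2)})$, as claimed.

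The combinatorial and series computations are routine once the encoding is fixed; the main obstacle is the probabilistic step above, namely rigorously justifying that the $d_i$ behave as independent fair coin flips. All three idealizations of the setup enter here --- i.i.d.\ sampling makes thresholds converge to conditional medians, perfect balance promotes this to an equality, and distinctness of the split objectives along any root-to-leaf path ensures that different $d_i$'s depend on different coordinates of $y$ rather than interacting through repeated splits on the same axis. Relaxing any of these assumptions would force one to track correlations among the $d_i$ and replace the clean binomial sum above with a less transparent estimate.
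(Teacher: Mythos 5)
Your proposal is correct and follows essentially the same route as the paper: both count the expected number of nodes at depth $d$ that \texttt{CheckDominance} must visit, arrive at the identical quantity $2\cdot(3/2)^d-1$ (you via inclusion--exclusion over the node's $L/R$ string, the paper by conditioning on the order $k$ and using $Q_d(k)=2^{-d}\binom{d}{k}$ with comparability probability $\min\{1,2^{1-k}\}$), and sum the resulting geometric series over depths $d\leq\log_2 n$. Your explicit flagging of the independence/fair-coin assumption for the $d_i$ is a point the paper asserts without elaboration, but the underlying argument is the same.
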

\begin{proof}
The prerequisite implies $m \geq d$ and hence $m \geq \log_2(n)$.
Since objective vectors are sampled i.i.d., the probability of the
candidate point to be covered by the left or right sub-tree is $50\%$
at each node. This corresponds to a $50\%$ chance to increment the order
$k$ when descending an edge of the tree, hence $k$ follows a binomial
distribution. We obtain $Q_d(k) = 2^{-d} \cdot \binom{d}{k}$ for $k \leq d$.
For given $k$, the chance of a node to be comparable to the candidate
point is $\min\{1, 2^{1-k}\}$, since for this to happen all $k$
decisions (descending left or right in the tree) must coincide. Hence
among the $2^d$ nodes at depth $d$ an expected number of
\begin{align*}
	1 + \sum_{k=1}^d 2^{1-k} \cdot \binom{d}{k} = 2^{d \cdot \log_2(3/2) + 1} - 1
\end{align*}
nodes is comparable to the candidate point. The statement follows by
summing over all depths $d \leq \log_2(n)$.
\hfill$\blacksquare$
\end{proof}
Under the milder (and actually pessimistic) assumption of random split
objectives we obtain a sub-linear upper bound.
\begin{theorem}
\label{theorem:upper}
Assume that each node splits the space along an objective $j$ drawn
uniformly at random from $\{1, \dots, m\}$. Then we have
$T \in o(nm)$.
\end{theorem}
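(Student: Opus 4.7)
My plan is to bound $\mathbb{E}[T]$ by the expected number of tree nodes comparable to the query point $y$, reduce the per-depth comparability probability to a balls-and-bins computation, and then estimate the resulting depth sum.

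First, by linearity of expectation, $\mathbb{E}[T] = \sum_{d=0}^{\log_2 n} 2^d P_d$, where $P_d$ is the probability that a generic node at depth $d$ is comparable to $y$. A node is comparable iff the accumulated sets $B$ and $W$ satisfy $B = \emptyset$ or $W = \emptyset$. Along the random path of length $d$, each split objective $J_i$ is drawn uniformly and independently from $\{1, \dots, m\}$. If $J_i$ has already appeared on the path, a simple invariant shows that $B$ and $W$ are unchanged; if $J_i$ is fresh, then by the independence and symmetry of the direction of descent and of the random position of $y$ relative to the split threshold, $J_i$ joins $B$ with probability $1/4$, joins $W$ with probability $1/4$, and causes no update with probability $1/2$. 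Writing $K$ for the number of distinct objectives on the path, the number of updates is $\mathrm{Bin}(K, 1/2)$ with each update type uniform, so inclusion--exclusion yields
\[
P(\text{comp}\mid K) = 2(3/4)^K - (1/2)^K.
\]

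Second, I would bound $\mathbb{E}[(3/4)^K]$. Since $K$ is the number of occupied bins in a balls-and-bins process with $d$ balls and $m$ bins, the occupancy indicators $Y_j$ are negatively associated, and $x \mapsto (3/4)^x$ is decreasing; hence
\[
\mathbb{E}[(3/4)^K] \leq \prod_{j=1}^{m} \mathbb{E}[(3/4)^{Y_j}] = \beta_d^m, \qquad \beta_d = \tfrac{3}{4} + \tfrac{1}{4}(1-1/m)^d,
\]
so $P_d \leq 2\beta_d^m$.

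Third, I would estimate $S = \sum_{d=0}^{\log_2 n} 2^d \beta_d^m$. Using $(1-1/m)^d \leq e^{-d/m}$ and $1 - e^{-u} \geq u/(1+u)$ gives $\beta_d^m \leq \exp(-md/(4(m+d)))$, which is at most $e^{-d/8}$ for $d \leq m$ and at most $e^{-m/8}$ for $d \geq m$. Splitting the sum at $d = m$, the low-depth part is a geometric series with ratio below $2e^{-1/8} < 1.77$ summing to $\Order(1.77^m)$, while the high-depth part sums to at most $2n\, e^{-m/8}$. Combining,
\[
\mathbb{E}[T] = \Order\bigl(1.77^m + n\, e^{-m/8}\bigr) = o(nm)
\]
in the combined asymptotic regime $n, m \to \infty$. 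The main obstacle is the bound on $\mathbb{E}[(3/4)^K]$: the occupancy indicators $Y_j$ are not independent, but their negative association (a standard but nontrivial property of balls-and-bins) is precisely what is needed to factorize the expectation. Everything else reduces to elementary asymptotic estimates.
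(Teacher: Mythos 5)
Your argument has a genuine gap, and it sits exactly at the step you call ``a simple invariant'': the claim that a split objective which has already appeared on the path leaves $B$ and $W$ unchanged is false. It holds for an objective that is already \emph{resolved} (already placed in $B$ or in $W$), but an objective $j$ that was split on earlier without being resolved (i.e.\ $y_j$ is still interior to the cell's $j$-range) can perfectly well be resolved by a \emph{later} split on the same $j$. This is not a cosmetic slip. In your model each objective can be resolved at most once, with probability $1/2$, so the fraction of fully resolved nodes saturates at $2^{-m}$ as $d\to\infty$ and $P_d$ converges to the positive constant $2(3/4)^m-(1/2)^m$; consequently your final bound carries the term $n\,e^{-m/8}$, which for \emph{fixed} $m$ is $\Theta(n)$ and therefore \emph{not} $o(nm)$. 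The theorem, read as the paper's proof reads it (fixed $m$, $n\to\infty$), is not established; your conclusion only survives in a joint regime where $m\to\infty$ and additionally $1.77^m=o(nm)$, which you do not make explicit. In the true dynamics every split on a still-unresolved objective resolves it for exactly one of the two children, so an objective stays unresolved at depth $d$ only with probability about $(1-\tfrac{1}{2m})^d\to 0$, which is what drives the count of costly nodes down to $O\big(m\,n^{\log_2(2-1/m)}\big)=o(n)$ for fixed $m$. (A secondary point: you would also need to discard the fully resolved nodes with $B=\{1,\dots,m\}$ or $W=\{1,\dots,m\}$ from the cost, since they satisfy your comparability criterion but are handled by early termination and amortized subtree deletion, respectively.)

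The paper's own proof takes the complementary route: it writes the exact one-step recursion for the distribution $Q_d(k)$ of the number of resolved objectives --- the transition weights $\tfrac{m-k+1}{2m}$ and $\tfrac{m+k}{2m}$ encode precisely the repeated-resolution effect your model loses --- observes $\lim_{d\to\infty}Q_d(m)=1$, and concludes that only $o(n)$ nodes are less than fully resolved, each costing $O(m)$. Your balls-and-bins reduction with negative association is an appealing way to make the per-depth probability explicit, and the direction of your inequalities is sound (undercounting resolutions only overestimates $P_d$, though the coupling deserves a sentence), but to salvage the approach you must track how often each objective is split on, not merely whether it appears, and let the resolution probability accumulate over repeats; the first-occurrence-only model is provably too coarse to give $o(nm)$.
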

\begin{proof}
In this case $Q$ is described by the following recursive formulas for $d > 0$:
\begin{align*}
	Q_d(0) & = \frac{1}{2} Q_{d-1}(0) \notag \\
	Q_d(k) & = \frac{1}{2} \left[ Q_{d-1}(k-1) \cdot \frac{m-k+1}{m} + Q_{d-1}(k) \cdot \frac{m+k}{m} \right]
\end{align*}
We obtain $\lim_{d \to \infty} Q_d(m) = 1$, hence only $o(n)$ nodes
are of order at most $m-1$, and only a subset of these must be visited.
Since processing a leaf node requires $\Order(m)$ operations in general,
we arrive at $T \in o(nm)$.
\hfill$\blacksquare$
\end{proof}

\section{Experimental Evaluation}
\label{section:experiments}

All three archives were implemented efficiently in C++.
Here we investigate how fast the archives operate in practice, how their
runtimes scale to large $n$ and $m$, and how their practical performance
relates to our analysis.

\subsection{Analytic Problems}

A first series of tests was performed with sequences of objective
vectors following controlled, analytic distributions. These tests allow
us to clearly disentangle effects caused by different fractions of
non-dominated points and non-stationarity due to improvement of
solutions over time.

\begin{figure}
\begin{center}
{
	\includegraphics[width=0.5\textwidth]{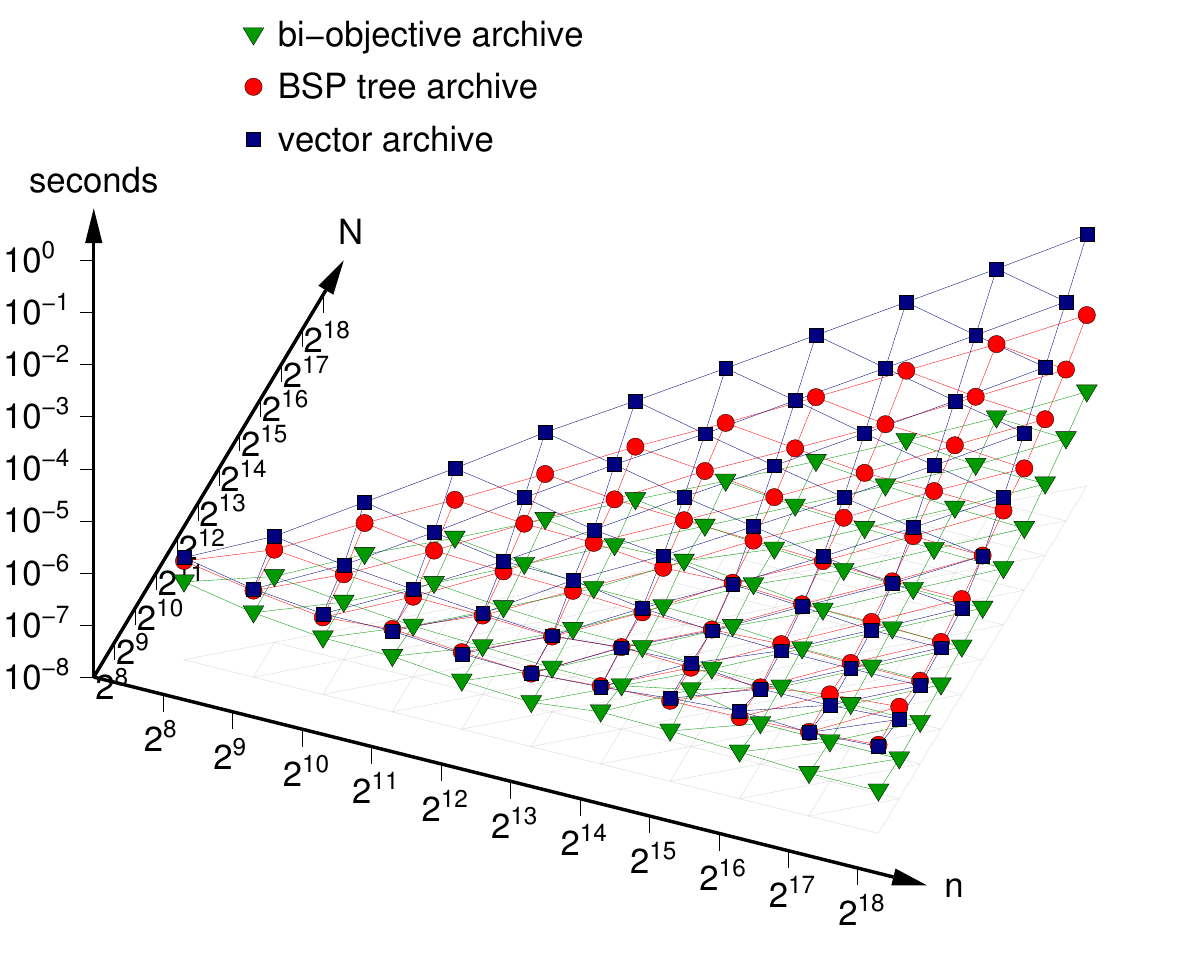}\includegraphics[width=0.5\textwidth]{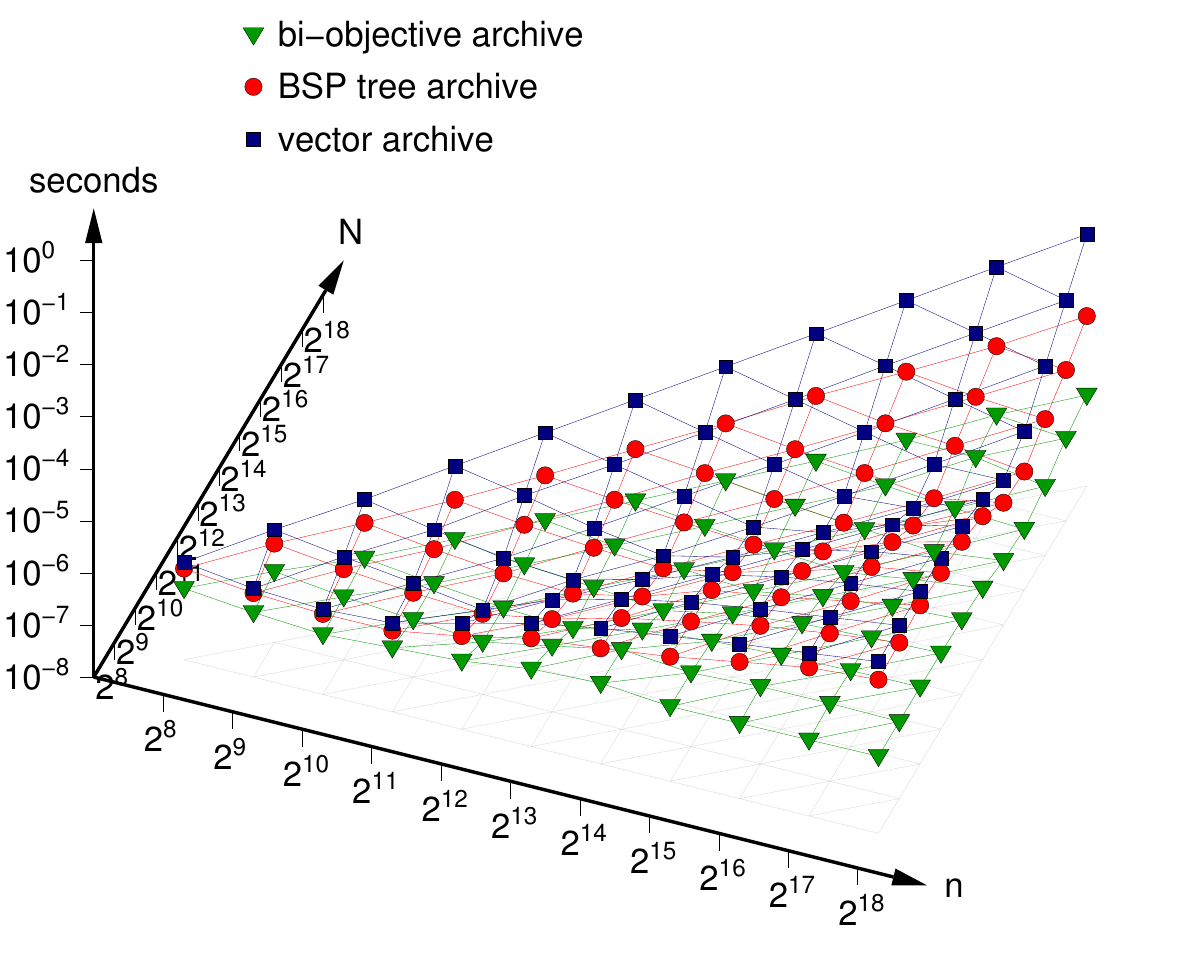}\\
	\includegraphics[width=0.5\textwidth]{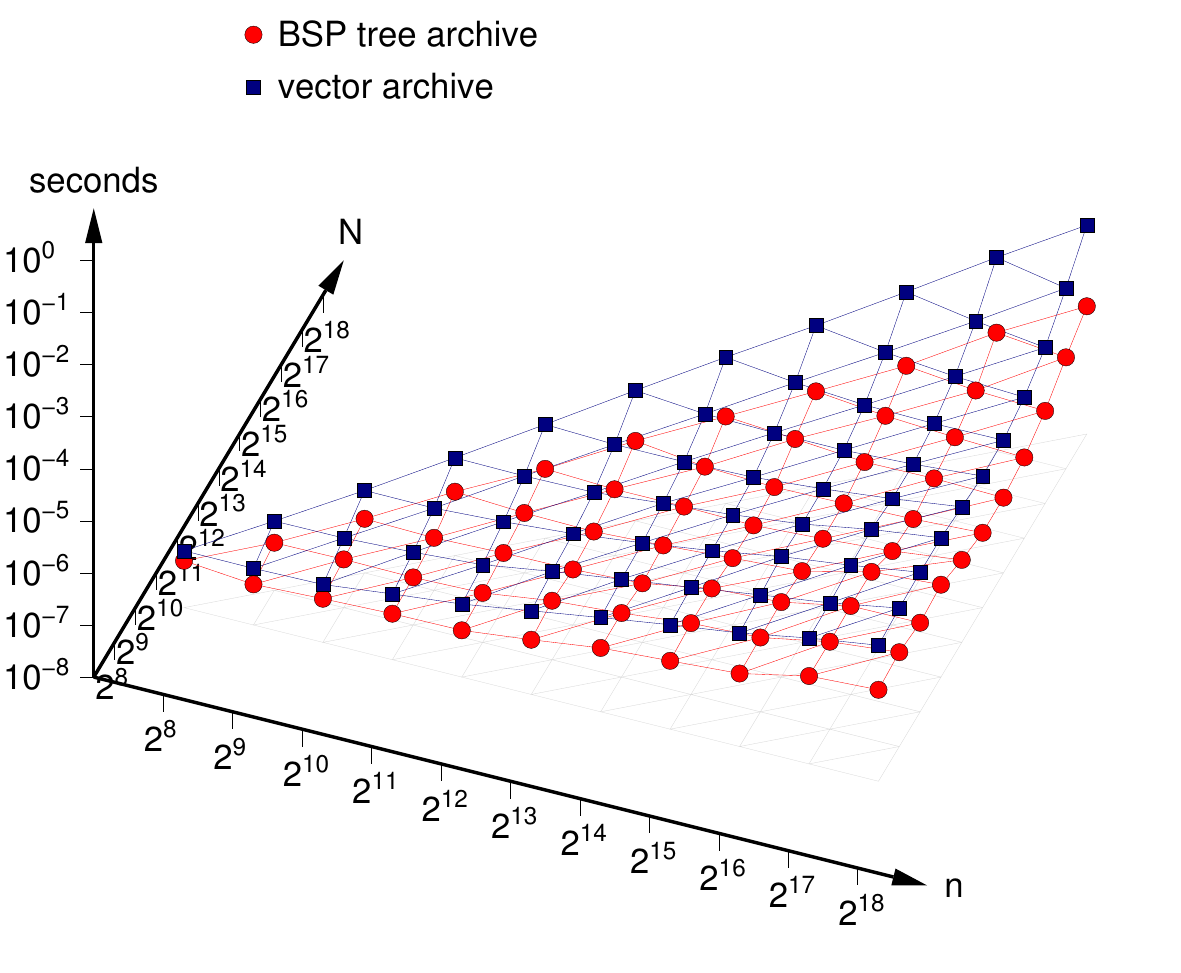}\includegraphics[width=0.5\textwidth]{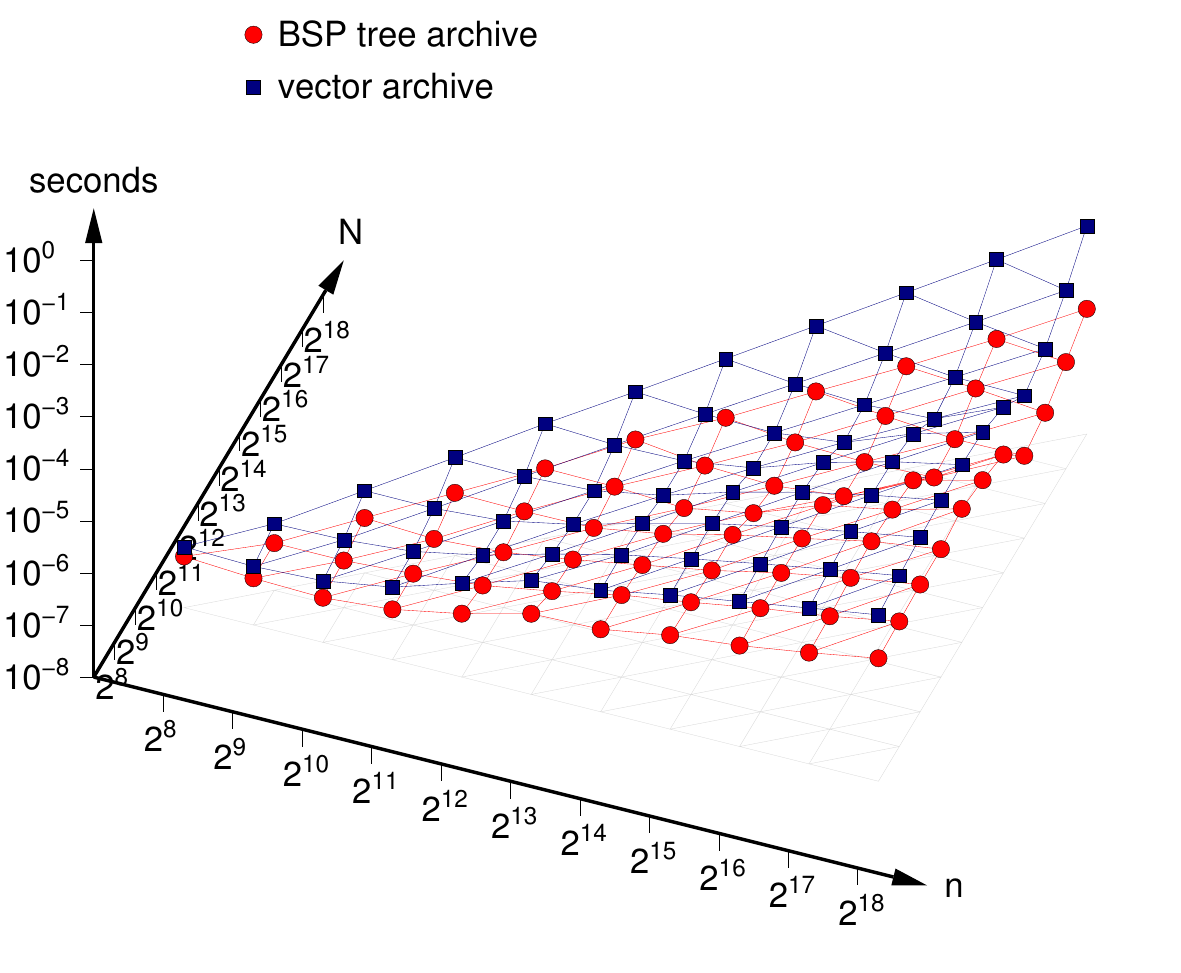}\\
}
\caption{ \label{figure:timings-1}
Processing time per objective vector for $m=2$ objectives (top) and
$m=3$ objectives (bottom), for a static distribution ($c=1$, left)
and solutions improving over time ($c=1.1$, right), for systematically
varied numbers of overall points ($n=N+D$) and non-dominated points ($N$)
in the range $2^8$ to $2^{18}$.
}
\end{center}
\end{figure}

\begin{figure}
\begin{center}
{
	\includegraphics[width=0.5\textwidth]{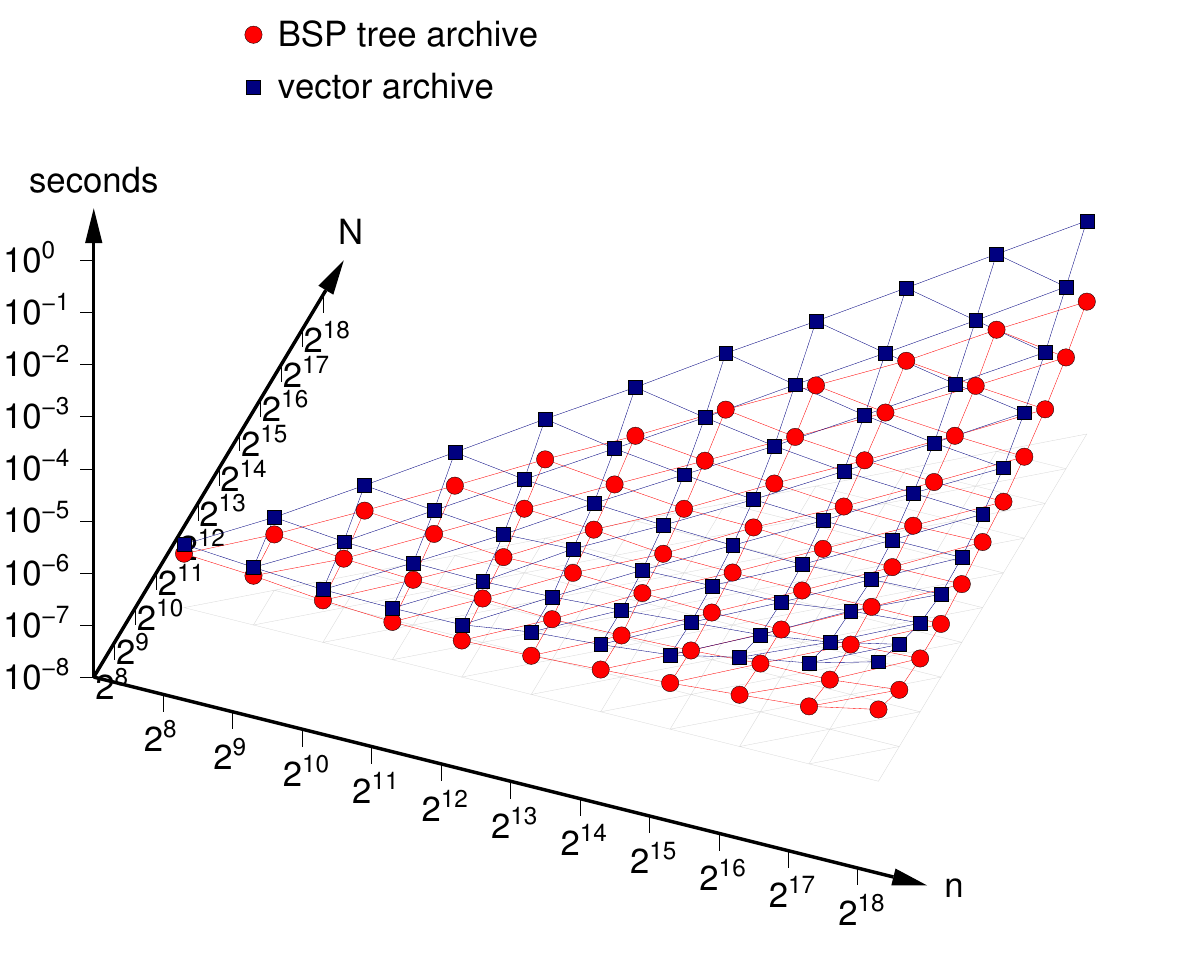}\includegraphics[width=0.5\textwidth]{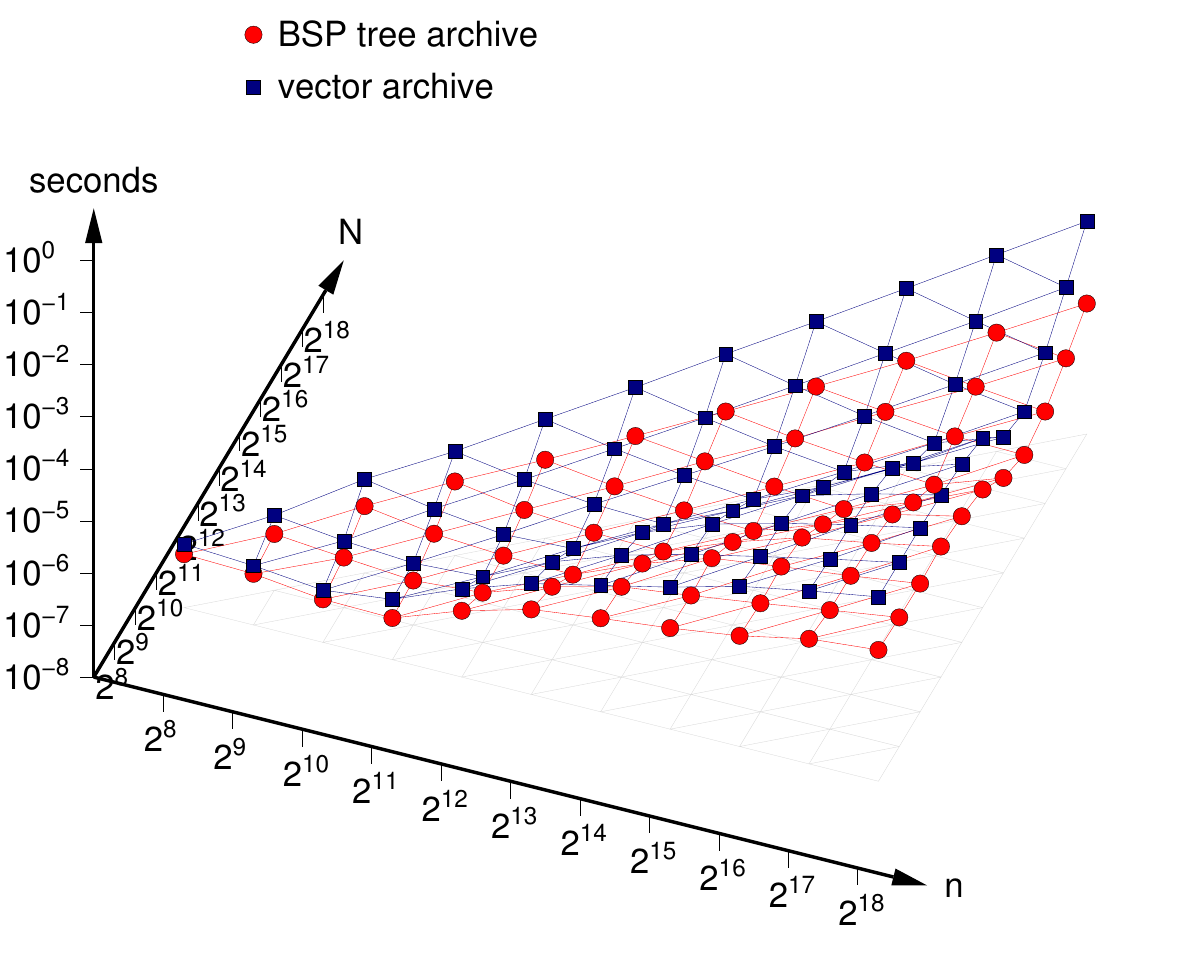}\\
	\includegraphics[width=0.5\textwidth]{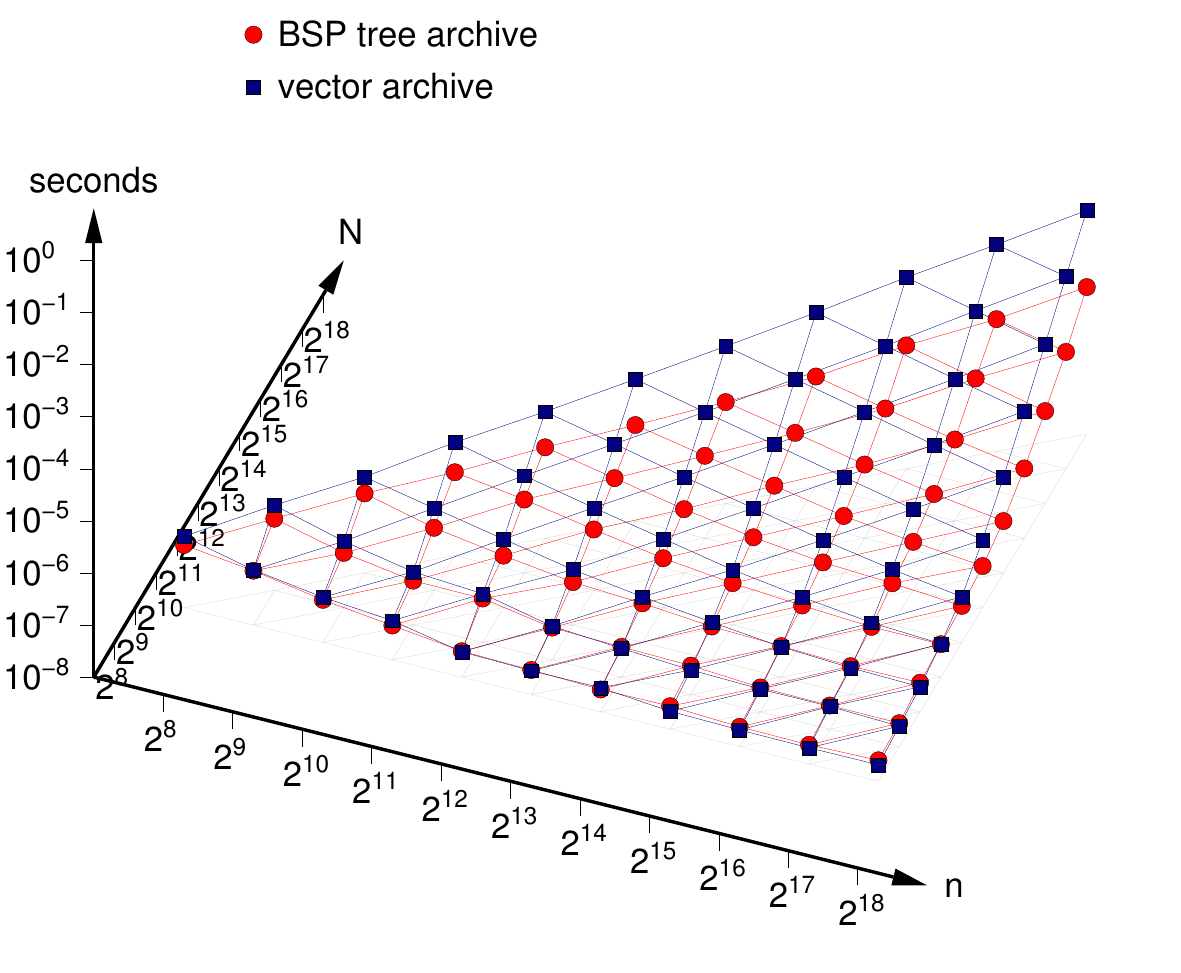}\includegraphics[width=0.5\textwidth]{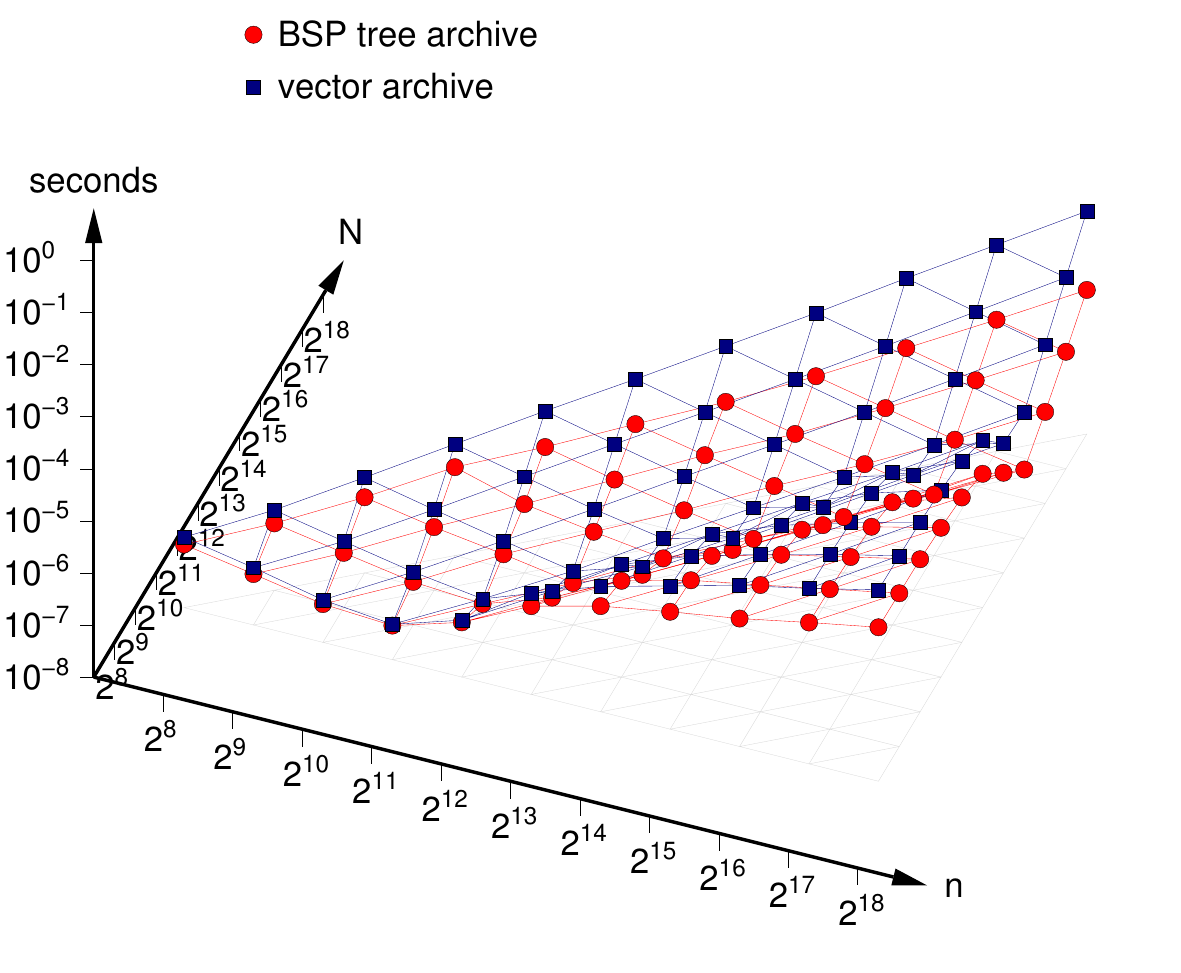}
}
\caption{ \label{figure:timings-2}
Processing time per objective vector for $m=5$ objectives (top) and
$m=10$ objectives (bottom), for a static distribution ($c=1$, left)
and solutions improving over time ($c=1.1$, right), for systematically
varied numbers of overall points ($n=N+D$) and non-dominated points ($N$)
in the range $2^8$ to $2^{18}$.
}
\end{center}
\end{figure}

We constructed archives from sequences of $D$ dominated ($d>0$) and $N$
non-dominated ($d=0$) normally distributed objective vectors according
to
\begin{align*}
	y^{(k)} \sim \Normal \left( \frac{d(N+D)}{k} \ones, \id - \frac{1}{m} \ones \ones^T \right)
	\enspace,
\end{align*}
where $\id \in \R^{m \times m}$ denotes the identity matrix and
$\ones = (1, \dots, 1)^T \in \R^m$ is the vector of all ones. The
distribution has unit variance in the subspace orthogonal to the $\ones$
vector.
The parameter $d \geq 0$ controls the systematic improvement of points
over time. At position $k$ of the sequence a dominated point was sampled
with probability $c\frac{D'}{N'+D'}$, where $D'$ and $N'$ denote the
number of remaining dominated and non-dominated points to be placed into
the sequence. Hence for $c > 1$ there is a preference for observing more
dominated points early on in the sequence, while for $c = 1$ there is
not.

The bucket size~$B$ and the tree re-balancing threshold~$z$ are tuning
parameters of the algorithm. We propose the settings $B=20$ and $z=6$ as
default values since they gave robust results across problems with
varying characteristics in preliminary experiments. The relatively large
bucket $B$ size yields well balanced trees. Therefore a high value of
the threshold~$z$ is affordable, because re-balancing is rarely needed.

Figures \ref{figure:timings-1} and \ref{figure:timings-2} display the
average processing times of the different archives over sequences with
varying $m$, $n=N+D$, $N$, and $c$. It is no surprise that for $m=2$ the
specialized bi-objective archive performs clearly best. For $m \geq 3$
the BSP tree is in all cases superior to the baseline. The vector
archive is only competitive for $N \ll n$, i.e., as long as the number
of non-dominated points in the archive remains small. Unsurprisingly,
this is also the domain where the systematic improvement of points over
time has a significant effect on archive performance. The overall effect
is similar for all archive types, and in comparison to the static case
the BSP archive can even increase its advantage over the linear memory
archive.

\begin{figure}
\begin{center}
{
	\includegraphics[scale=0.6]{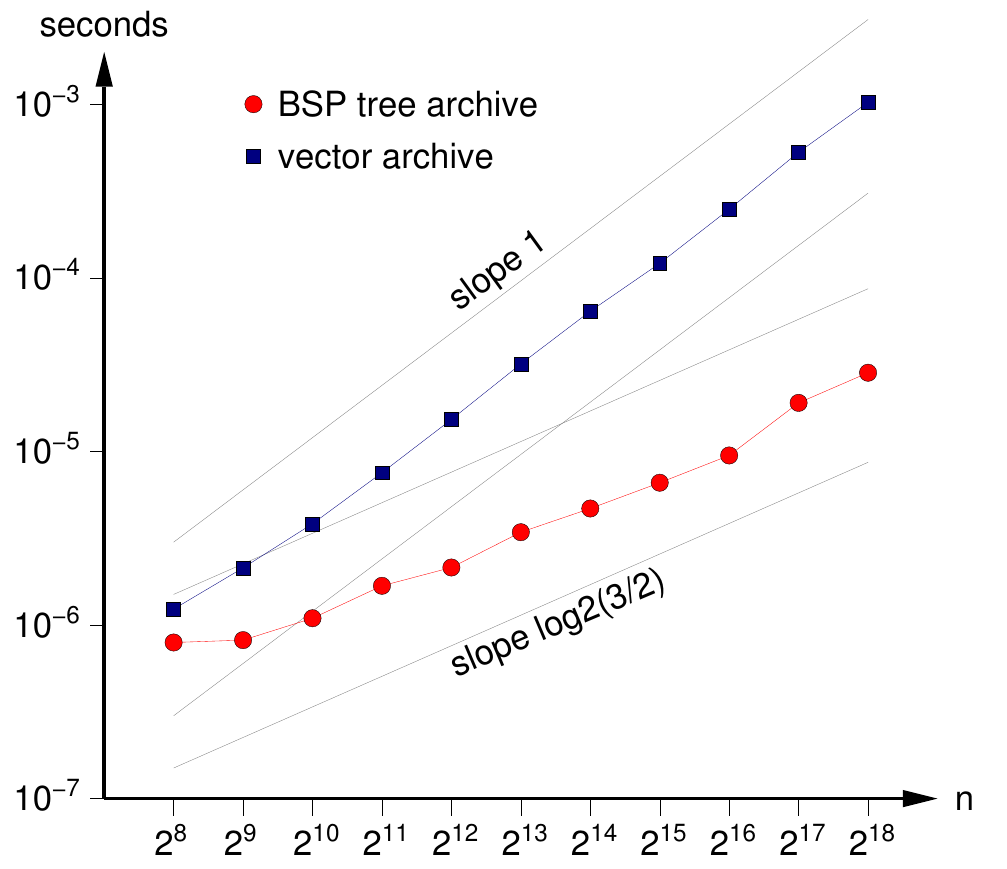}\includegraphics[scale=0.6]{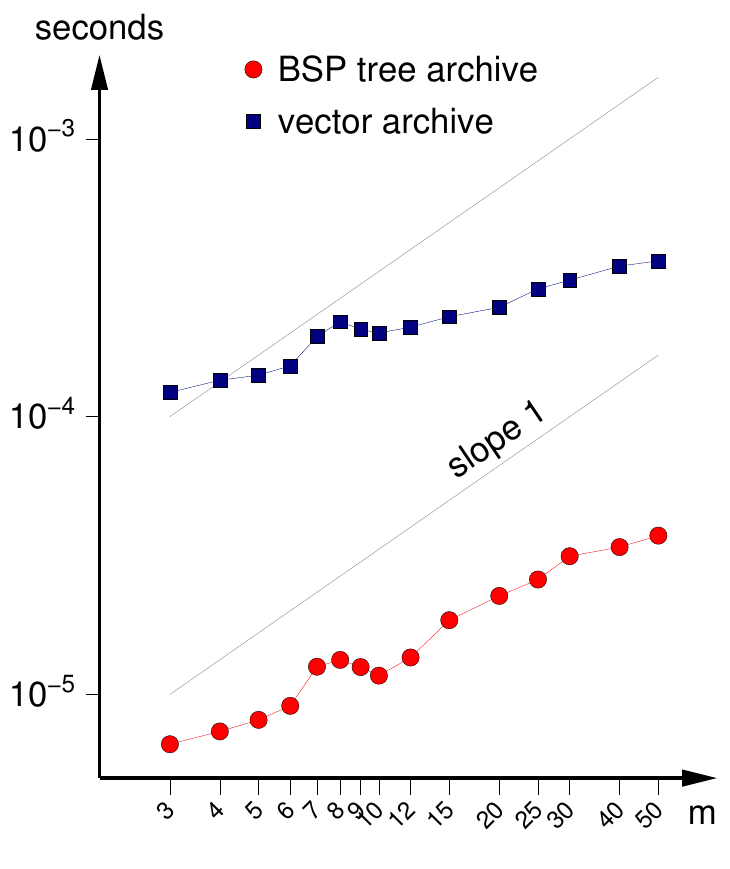}
}
\caption{ \label{figure:scaling}
Empirical scaling w.r.t.\ $n$ (left) and $m$ (right). The gray lines in
the background of the log-log-plots indicate the exponents $\alpha$ and
$\beta$ of the hypothetical scaling laws $T(n) \in \Theta(n^\alpha)$ and
$T(m) \in \Theta(m^\beta)$, respectively.
}
\end{center}
\end{figure}

Figure~\ref{figure:scaling} (left) shows the empirical scaling of the
archives for a setting close to the preconditions of the theoretical
analysis:
$m=3$, $D=0$, $c=1$. The actual scaling is very close to the lower bound
of order $n^{\log_2(3/2)} \approx n^{0.585}$ from
theorem~\ref{theorem:lower}. In contrast, the vector-based archive
scales perfectly linear in~$n$.

Figure~\ref{figure:scaling} (right) investigates the scaling to large
numbers of objectives. It plots the runtime per processing step for an
archive consisting of $2^{15}$ non-dominated points. In contrast to
Theorem~\ref{theorem:lower}, in practice the curve is of course not flat.
The algorithm still scales gracefully to large numbers of objectives.
For the range $3 \leq m \leq 50$ we observe sub-linear scaling. The
baseline method (surprisingly) exhibits even slightly better scaling
(while taking 10 times more time in absolute terms).

\subsection{MOEA Runs}

A second series of tests was performed with objective vectors generated
by state-of-the-art MOEAs on established benchmark functions. We used
two variants of the multi-objective covariance matrix adaptation
evolution strategy (MO-CMA-ES), namely with generational and with
steady-state selection \cite{igel:2007,voss:2010}. We applied the
implementations found in the Shark library,%
\footnote{\url{http://shark-ml.org}}
version 3.1 \cite{shark:2008}.
The population size was set to 100, all parameters were left at their
defaults. Sequences of objective vectors were generated by running the
optimizers on the scalable benchmark problems DTLZ1 to DTLZ4
\cite{deb:2002}, with 30 variables, 3 objectives, and a  budget of
200,000 function evaluations. The results are summarized in
table~\ref{table:EMOAs}.

\begin{table}
\begin{center}
\begin{tabular}{|l|l|r|r|r|r|}
\hline
~MOEA~ & ~problem~ & ~N~ & ~k-d~tree~ & ~vector~ & ~speed-up~ \\
\hline
~generational MO-CMA-ES~  & ~DTLZ1~ & ~104,191~ & ~10.70~ & ~167.78~ & ~15.7~ \\
~steady state MO-CMA-ES~  & ~DTLZ1~ &  ~91,022~ &  ~9.58~ & ~166.49~ & ~17.4~ \\
\hline
~generational MO-CMA-ES~  & ~DTLZ2~ &  ~42,153~ &  ~8.15~ &  ~64.71~ &  ~8.0~ \\
~steady state MO-CMA-ES~  & ~DTLZ2~ &  ~53,814~ & ~10.67~ &  ~79.58~ &  ~7.4~ \\
\hline
~generational MO-CMA-ES~  & ~DTLZ3~ &   ~7,895~ &  ~4.13~ &  ~32.93~ &  ~8.0~ \\
~steady state MO-CMA-ES~  & ~DTLZ3~ &  ~61,860~ &  ~4.59~ &  ~72.25~ & ~15.7~ \\
\hline
~generational MO-CMA-ES~  & ~DTLZ4~ &  ~11,621~ &  ~2.75~ &  ~13.31~ &  ~4.8~ \\
~steady state MO-CMA-ES~  & ~DTLZ4~ &  ~23,097~ &  ~3.86~ &  ~27.00~ &  ~7.0~ \\
\hline
\end{tabular}
\vspace*{0.5em}
\caption{ \label{table:EMOAs}
The table lists the number $N$ of non-dominated points at the end of
the optimization run, the average runtimes (in $10^{-6}$ seconds) for
processing a single point with the BSP tree archive and the vector
archive, as well as the speed-up factor of the BSP tree archive over the
vector archive.
}
\end{center}
\end{table}

In all cases the BSP tree archive was significantly faster than the
baseline. It outperformed the vector-based archive roughly by a factor
of 10. The exact speed-up correlates with the number of non-dominated
points. This is in line with our analysis, as well with the results for
analytically controlled distributions of objective vectors. The results
indicate that the tree-based archive also works well for realistic
sequences of objective vectors produced by MOEAs. It demonstrates its
strengths in particular if the set of non-dominated points grows large.

\section{Conclusion}
\label{section:conclusion}

We have presented an algorithm for updating an archive of Pareto optimal
objective vectors processed iteratively in a sequence. The data
structure is based on a k-d tree for binary space partitioning. This
choice yields runtime sub-linear in the archive size~$n$. The method is
shown to perform well for medium to large scale archives, where updating
performance matters most.
The archive is applicable to problems with arbitrary number $m$ of
objectives, including the many-objective case $m \gg 3$. Overall, these
properties make the proposed algorithm suitable for online processing of
non-dominated sets, e.g., in evolutionary multi-objective optimization.

\bibliographystyle{plain}

\begin{thebibliography}{10}

\bibitem{de-berg:2000}
M.~De~Berg, M.~Van~Kreveld, M.~Overmars, and O.~C. Schwarzkopf.
\newblock {\em {Computational Geometry}}.
\newblock Springer, 2000.

\bibitem{deb:2000}
K.~Deb, S.~Agrawal, A.~Pratap, and T.~Meyarivan.
\newblock {A fast elitist non-dominated sorting genetic algorithm for
  multi-objective optimization: NSGA-II}.
\newblock In {\em Parallel Problem Solving from Nature (PPSN) VI}, pages
  849--858. Springer, 2000.

\bibitem{deb:2002}
K.~Deb, L.~Thiele, M.~Laumanns, and E.~Zitzler.
\newblock {Scalable Multi-Objective Optimization Test Problems}.
\newblock In {\em Congress on Evolutionary Computation (CEC 2002)}, pages
  825--830. IEEE, 2002.

\bibitem{fieldsend:2003}
J.~E. Fieldsend, R.~M. Everson, and S.~Singh.
\newblock {Using Unconstrained Elite Archives for Multi-Objective
  Optimization}.
\newblock {\em IEEE Transactions on Evolutionary Computation}, 7(3):305--323,
  2003.

\bibitem{fortin:2013}
F.-A. Fortin, S.~Grenier, and M.~Parizeau.
\newblock {Generalizing the Improved Run-Time Complexity Algorithm for
  Non-Dominated Sorting}.
\newblock In {\em Proceedings of the Genetic and Evolutionary Computation
  Conference (GECCO)}, 2013.

\bibitem{igel:2007}
C.~Igel, N.~Hansen, and S.~Roth.
\newblock Covariance matrix adaptation for multi-objective optimization.
\newblock {\em Evolutionary Computation}, 15(1):1--28, 2007.

\bibitem{shark:2008}
C.~Igel, V.~Heidrich-Meisner, and T.~Glasmachers.
\newblock Shark.
\newblock {\em Journal of Machine Learning Research}, 9:993--996, 2008.

\bibitem{jensen:2003}
M.~T. Jensen.
\newblock {Reducing the run-time complexity of multiobjective EAs: The NSGA-II
  and other algorithms}.
\newblock {\em IEEE Transactions on Evolutionary Computation}, 7(5):503--515,
  2003.

\bibitem{kossmann:2002}
D.~Kossmann, F.~Ramsak, and S.~Rost.
\newblock {Shooting Stars in the Sky: An Online Algorithm for Skyline Queries}.
\newblock In {\em Proceedings of the 28th international conference on Very
  Large Data Bases}, pages 275--286. VLDB Endowment, 2002.

\bibitem{krause:2016}
O.~Krause, T.~Glasmachers, N.~Hansen, and C.~Igel.
\newblock {Unbounded Population MO-CMA-ES for the Bi-Objective BBOB Test
  Suite}.
\newblock In {\em Proceedings of the Genetic and Evolutionary Computation
  Conference (GECCO)}, 2016.

\bibitem{kung:1975}
H.-T. Kung, F.~Luccio, and F.~P. Preparata.
\newblock {On Finding the Maxima of a Set of Vectors}.
\newblock {\em Journal of the ACM (JACM)}, 22(4):469--476, 1975.

\bibitem{lopez:2011}
M.~L{\'o}pez-Ib{\'a}{\~n}ez, J.~D Knowles, and M.~Laumanns.
\newblock {On Sequential Online Archiving of Objective Vectors}.
\newblock In {\em International Conference on Evolutionary Multi-Criterion
  Optimization (EMO)}, pages 46--60. Springer, 2011.

\bibitem{mostaghim:2002}
S.~Mostaghim, J.~Teich, and A.~Tyagi.
\newblock {Comparison of data structures for storing Pareto-sets in MOEAs}.
\newblock In {\em Proceedings of the 2002 Congress on Evolutionary Computation
  (CEC)}, volume~1, pages 843--848. IEEE, 2002.

\bibitem{papadias:2003}
D.~Papadias, Y.~Tao, G.~Fu, and B.~Seeger.
\newblock An optimal and progressive algorithm for skyline queries.
\newblock In {\em Proceedings of the 2003 ACM SIGMOD International Conference
  on Management of Data}, pages 467--478. ACM, 2003.

\bibitem{schuetze:2003}
O.~Sch{\"{u}}tze.
\newblock {A New Data Structure for the Nondominance Problem in Multi-objective
  Optimization}.
\newblock In {\em International Conference on Evolutionary Multi-Criterion
  Optimization (EMO)}, pages 509--518. Springer, 2003.

\bibitem{voss:2010}
T.~Vo{\ss}, N.~Hansen, and C.~Igel.
\newblock {Improved step size adaptation for the MO-CMA-ES}.
\newblock In {\em 12th annual conference on Genetic and Evolutionary
  Computation (GECCO)}, pages 487--494. ACM, 2010.

\end{thebibliography}

\end{document}